\let\SF@@footnote\footnote
\def\footnote{\ifx\protect\@typeset@protect
    \expandafter\SF@@footnote
  \else
    \expandafter\SF@gobble@opt
  \fi
}
\def\csname SF@gobble@opt \endcsname{\@ifnextchar[
  \SF@gobble@twobracket
  \@gobble
}
\edef\SF@gobble@opt{\noexpand\protect
  \expandafter\noexpand\csname SF@gobble@opt \endcsname}
\def\SF@gobble@twobracket[#1]#2{}
\numberwithin{equation}{section}
\numberwithin{figure}{section}
\theoremstyle{plain}
\newtheorem{thm}{\protect\theoremname}
  \theoremstyle{definition}
  \newtheorem{defn}[thm]{\protect\definitionname}
  \providecommand{\definitionname}{Definition}
\providecommand{\theoremname}{Theorem}
\begin{document}

\title{Population Stability and Momentum }

\maketitle

\section*{Arni S.R. Srinivasa Rao%
\footnote{
\dedicatory{1. To commemorate MPE2013 launched by the International Mathematics
Union, the author dedicates this work to Alfred J. Lotka.}\maketitle
}$^{,}$%
\footnote{
\dedicatory{Arni S.R. Srinivasa Rao is an Associate Professor at Georgia Regents
University, Augusta, USA. His email address is arrao@gru.edu }\maketitle
}$ $$^{,}$%
\footnote{
\dedicatory{Acknowledgements: Very useful comments by referees have helped to
re-write some parts of the article for better readability.  Comments
from Professor JR Drake (University of Georgia, Athens), NV Joshi
(Indian Institute of Science, Bangalore) and from Dr. Cynthia Harper
(Oxford) helped very much in exposition of the paper. My sincere gratitude
to all. Author is partially supported by DFID and BBSRC, UK in the
form of CIDLID project BB/H009337}\maketitle
}}

(\textbf{Appeared in }\textbf{\emph{Notices of the American Mathematical
Society)}}
\begin{abstract}
A new approach is developed to understand stability of a population
and further understanding of population momentum. These ideas can
be generalized to populations in ecology and biology.
\end{abstract}

\section{Introduction}

One commonly prescribed approach for understanding the stability of
system of dependent variables is that of Lyapunov. In a possible alternative
approach - when variables in the system have momentum then that can
trigger additional dynamics within the system causing the system to
become unstable. In this study stability of population is defined
in terms of elements in the set of births and elements in the set
of deaths. Even though the cardinality of the former set has become
equal to the cardinality of the latter set, the momentum with which
this equality has occurred determines the status of the population
to remain at \emph{stable.} Such arguments also works for the other
population ecology problems.

\section{Population Stability Theory}

Suppose $\left|P_{N}(t_{0})\right|$ be the cardinality of the set
of people, $P_{N}(t_{0}),$ representing population at global level
at time $t_{0}$, where $P_{N}(t_{0})=\left\{ u_{1},u_{2},\cdots,u_{N}\right\} $,
the elements $u_{1},u_{2},\cdots,u_{N}$ represent individuals in
the population. Broadly speaking, the Lyapunov stability principles
(see \cite{VLL}) suggests, $ $$\left|P_{N}(t_{0})\right|$ is asymptotically
stable at population size $U$, if $\left|\left|P_{N}(T)\right|-U\right|<\epsilon$
($\epsilon>0)$ at all $T$ whenever $T>t_{0}$. In some sense, $\left|P_{N}(t_{0})\right|$
attains the value $U$ over the period of time. Lotka-Voltera's predator
and prey population models provide one of the classical and earliest
stability analyses of population biology (see for example, \cite{JDM})
and Lyapunov stability principles often assist in the analysis of
such models. These models have equations that describe the dynamics
of at least two interacting populations with parameters describing
interactions and natural growth. Outside human population models and
ecology models, stability also plays a very important role in understanding
epidemic spread \cite{AR}. In this paper, we are interested in factors
that cause dynamics in $P_{N}$ and relate these factors with status
of stability. A set of people $P_{M}(t_{0})=\left\{ u_{m_{1}},u_{m_{2}},\cdots,u_{m_{M}}\right\} $,
where $P_{M}(t_{0})\subset P_{N}(t_{0})$, are responsible for increasing
the population (reproduction) during the period $(t_{0},s)$ and contribute
to $P_{N}(s)$, the set of people at $s$ (if they survive until the
time $s$). The set $Q_{M_{1}}(s-t_{0})=\left\{ v_{M_{11}},v_{M_{12}},\cdots,v_{M_{1M_{1}}}\right\} $
represent removals (due to deaths) from $P_{N}(t_{0})$ during the
time interval $(t_{0},s)$. Let $\mbox{R}_{\phi}(s-t_{0})$ be the
period reproductive rate (net) applied on $P_{M}(t_{0})$ for the
period $(t_{0},s)$, then the number of new population added during
$(t_{0},s)$ is $\mbox{R}_{\phi}(s-t_{0})\left|P_{M}(t_{0})\right|.$
Net reproduction rate at time $t_{0}$ (or in a year $t_{0}$) is
the average number of female children that would be born to single
women if she passes through age-specific fertility rates and age-specific
mortality rates that are observed at $t_{0}$ (for the year $t_{0}$).
Since net reproductive rates are futuristic measures, we use period
(annual) reproductive rates for computing period (annual) increase
in population. Let $\mbox{C}_{N_{1}}(s-t_{0})=\left\{ w_{1},w_{2},\cdots,w_{N_{1}}\right\} $
be the set of newly added population during $(t_{0},s)$ to the set
$P_{N}(t_{0}).$ After allowing the dynamics during $(t_{0},s)$,
the population at $s$ will be

\begin{eqnarray}
P_{N}(t_{0})\cup C_{N_{1}}(s-t_{0})-Q_{M_{1}}(s-t_{0}) & = & \begin{cases}
\left.\begin{array}{c}
u:u\in P_{N}(t_{0})\cup C_{N_{1}}(s-t_{0})\mbox{ }\\
\mbox{and }u\notin Q_{M_{1}}(s-t_{0})
\end{array}\right\} \end{cases}\nonumber \\
 & = & \left\{ u_{1},u_{2},\cdots,u_{N+N_{1}-M_{1}}\right\} \label{eq:1}
\end{eqnarray}

Note that, $Q_{M_{1}}(s-t_{0})\subset P_{N}(t_{0})\cup C_{N_{1}}(s-t_{0})$,
because the set of elements $\left\{ v_{M_{1}},v_{M_{12}},\right.$
$\left.\cdots,v_{M_{1M_{1}}}\right\} $ eliminated during the time
period $(t_{0},s)$ are part of the set of elements 

$\left\{ u_{1},u_{2},\cdots,u_{N},w_{1}\right.$ $\left.,w_{2},\cdots,w_{N_{1}}\right\} $
and the resulting elements surviving by the time $s$ are represented
in equation (\ref{eq:1}). The element $u_{1}$ in the set (\ref{eq:1})
may not be the same individual in the set $P_{N}(t_{0})$. Since we
wanted to retain the notation that represents people living at each
time point, so for ordering purpose, we have used the symbol $u_{1}$
in the set (\ref{eq:1}).

Using Cantor\textendash{}Bernstein\textendash{}Schroeder theorem \cite{PS},
$ $$\left|\mbox{C}_{N_{1}}(s-t_{0})\right|=$ $\left|Q_{M_{1}}(s-t_{0})\right|$
if $ $$\left|\mbox{C}_{N_{1}}(s-t_{0})\right|$ $\leq$ $\left|Q_{M_{1}}(s-t_{0})\right|$
and $ $$\left|\mbox{C}_{N_{1}}(s-t_{0})\right|$ $\geq\;$ $ $ $\left|Q_{M_{1}}(s-t_{0})\right|.$
If $ $$\left|\mbox{C}_{N_{1}}(s-t_{0})\right|=$ $\left|Q_{M_{1}}(s-t_{0})\right|$
then the natural growth of the population (in a closed situation)
is zero and if this situation continues further over the time then
the population could be termed as stationary. Assuming these two quantities
are not same at $t_{0}$, the process of two quantities $\left|\mbox{C}_{N_{1}}\right|$
and $\left|Q_{M_{1}}\right|$ becoming equal could eventually happen
due to several sub-processes. 

Case I: $\left|\mbox{C}_{N_{1}}\right|>$ $\left|Q_{M_{1}}\right|$
at time $t_{0}.$ We are interested in studying the conditions for
the process $\left|\mbox{C}_{N_{1}}\right|\rightarrow$ $\left|Q_{M_{1}}\right|$
for some $s>t_{0}$. Two factors play a major role in determining
the speed of this process, they are, compositions of the family of
sets $\left[\left\{ P_{M}(s)\right\} :\forall s>t_{0}\right]$ and
$\left[\left\{ \mbox{R}_{\phi}(s)\right\} :\forall s>t_{0}\right]$.
Suppose $\left|P_{M}(s_{1})\right|>\left|P_{M}(s_{2})\right|>\cdots>\left|P_{M}(s_{T})\right|$
but the family of $\left\{ \left|\mbox{R}_{\phi}(s)\right|\right\} $
does not follow any decreasing pattern for some $t_{0}<s_{1}<s_{2}<...<s_{T}<s$,
then $\left|\mbox{C}_{N_{1}}\right|\nrightarrow\left|Q_{M_{1}}\right|$
by the time $s_{T}$. If $\mbox{R}_{\phi}(s_{1})>\mbox{R}_{\phi}(s_{2})>...>\mbox{R}_{\phi}(s_{T})$
for $t_{0}<s_{1}<s_{2}<...<s_{T}<s$ such that $\left|\mbox{R}_{\phi}(s_{T}-s_{T-1})\left|P_{M}(s_{T-1})\right|-Q_{M_{1}}(s_{T}-s_{T-1})\right|\rightarrow0$
for some sufficiently large $T>t_{0}$ and sufficiently small $\left|\mbox{R}_{\phi}(s_{T}-s_{T-1})\right|$,
then $\left|\mbox{C}_{N_{1}}\right|\rightarrow$ $\left|Q_{M_{1}}\right|$
by the time $s_{T}.$ Note that in an ideal demographic transition
situation, both these quantities should decline over the period and
the rate of decline of $ $$\left|Q_{M_{1}}\right|$ is slower than
the rate of decline in $\left|\mbox{C}_{N_{1}}\right|$ because $\left|\mbox{C}_{N_{1}}\right|>$
$\left|Q_{M_{1}}\right|$ at time $t_{0}.$ Demographic transition
theory, in simple terms, is all about, determinants, consequences
and speed of declining of high rates of fertility and mortality to
low levels of fertility and mortality rates. For introduction of this
concept see \cite{KD} and for an update of recent works, see \cite{Caldwell}.
Above trend of $\left|P_{M}(s_{1})\right|>\left|P_{M}(s_{2})\right|>\cdots>\left|P_{M}(s_{T})\right|$
(i.e. decline in people of reproductive ages over the time after $t_{0}$)
happens when births continuously decrease for several years. Following
the trend $\mbox{R}_{\phi}(s_{1})>\mbox{R}_{\phi}(s_{2})>...>\mbox{R}_{\phi}(s_{T})$
will lead to decline in new born babies and this will indirectly result
in decline in rate of growth of people who have reproductive potential.
However the decline in $\left|\mbox{R}_{\phi}(s)\right|$ for $s>t_{0}$
is well explained by social and biological factors, which need not
follow any pre-determined mathematical model. However the trend in
$\left|\mbox{R}_{\phi}(t)\right|$ for $t<t_{0}$ can be explained
using models by fitting parameters obtained from data. During the
entire process the value of $\left|Q_{M_{1}}\right|$ after time $t_{0}$
is assumed to be dynamic and decreases further. If a population continues
to remain at this stage of replacement we call it a \emph{stable population}.
The cycle of births, population aging and deaths is a \emph{continuous
process} with \emph{discretely quantifiable factors}. Due to improvement
in medical sciences there could be some delay in deaths, but eventually
the aged population has to be moved out of $\left\{ P_{N}\right\} $,
and consequently, population stability status can be broken with a
continuous decline in $\left\{ \left|C_{N_{1}}\right|\right\} $. 

Case II: $\left|\mbox{C}_{N_{1}}\right|=$ $\left|Q_{M_{1}}\right|$
at time $t_{0}.$ It is important to ascertain whether this situation
was immediately proceeded by case I or case II before determining
the stability process. Suppose case II is immediately preceded by
case I, then the rapidity and magnitude at which the difference between
$\left|\mbox{C}_{N_{1}}\right|$ and $\left|Q_{M_{1}}\right|$ was
shrunk prior to $t_{0}$ need to be quantified. Let us understand
the contributing factors for the set $Q_{M}$. At each $t$, there
is a possibility that the elements from the sets $C_{N_{1}}$, $P_{N}-C_{N_{1}}-P_{M}$,
$P_{M}$ are contributing to the set $Q_{M}.$ Due to high infant
mortality rates, the contribution of $C_{N_{1}}$ into $Q_{M}$ is
considered to be high, deaths of adults of reproductive ages, $P_{M}$,
and all other individuals (including the aged), $P_{N}-C_{N_{1}}-P_{M}$,
will be contributing to the set $Q_{M}$. Case II could occur when
$\left|\mbox{C}_{N_{1}}\right|$ and $\left|Q_{M_{1}}\right|$ are
at higher values or at lower values. Equality at higher values possibly
indicates, the number of deaths due to three factors mentioned here
are high (including high old age deaths) and these are replaced by
equal high number of births, i.e. $\left|R_{\phi}\right|$ and $\left|P_{M}\right|$
are usually high to reproduce a high birth numbers. If equality at
lower values of $\left|\mbox{C}_{N_{1}}\right|$ and $\left|Q_{M_{1}}\right|$
occurs after phase of case I then the chance of $P_{N}$ remaining
in stable position is higher. Suppose elements of $P_{N}$ are arbitrarily
divided into $k-$independent and non-empty subsets, $A(1)$, $A(2)$,
$\cdots$,$A(k)$ such that $ $$\left|P_{N}\right|$$=$$\int_{1}^{k}\left|A(s)\right|ds$.
Let $F$ be the family of all the sets $A(s)$ such that $\cup\left(A(s\right))=P_{N}$.\textbf{
}Members of $F$ are disjoint. Suppose $\left(\begin{array}{c}
F\\
k^{*}
\end{array}\right)$ be an arbitrary size of $k^{*}$ of subset of $F$ are satisfying
the case II and $F-\left(\begin{array}{c}
F\\
k^{*}
\end{array}\right)$ are not satisfying at time $t_{0}$ and $t>t_{0}$, then we are not
sure of total population also attains stability by Theorem \ref{local vs global stability theorem}.

\begin{figure}
\includegraphics[scale=0.6]{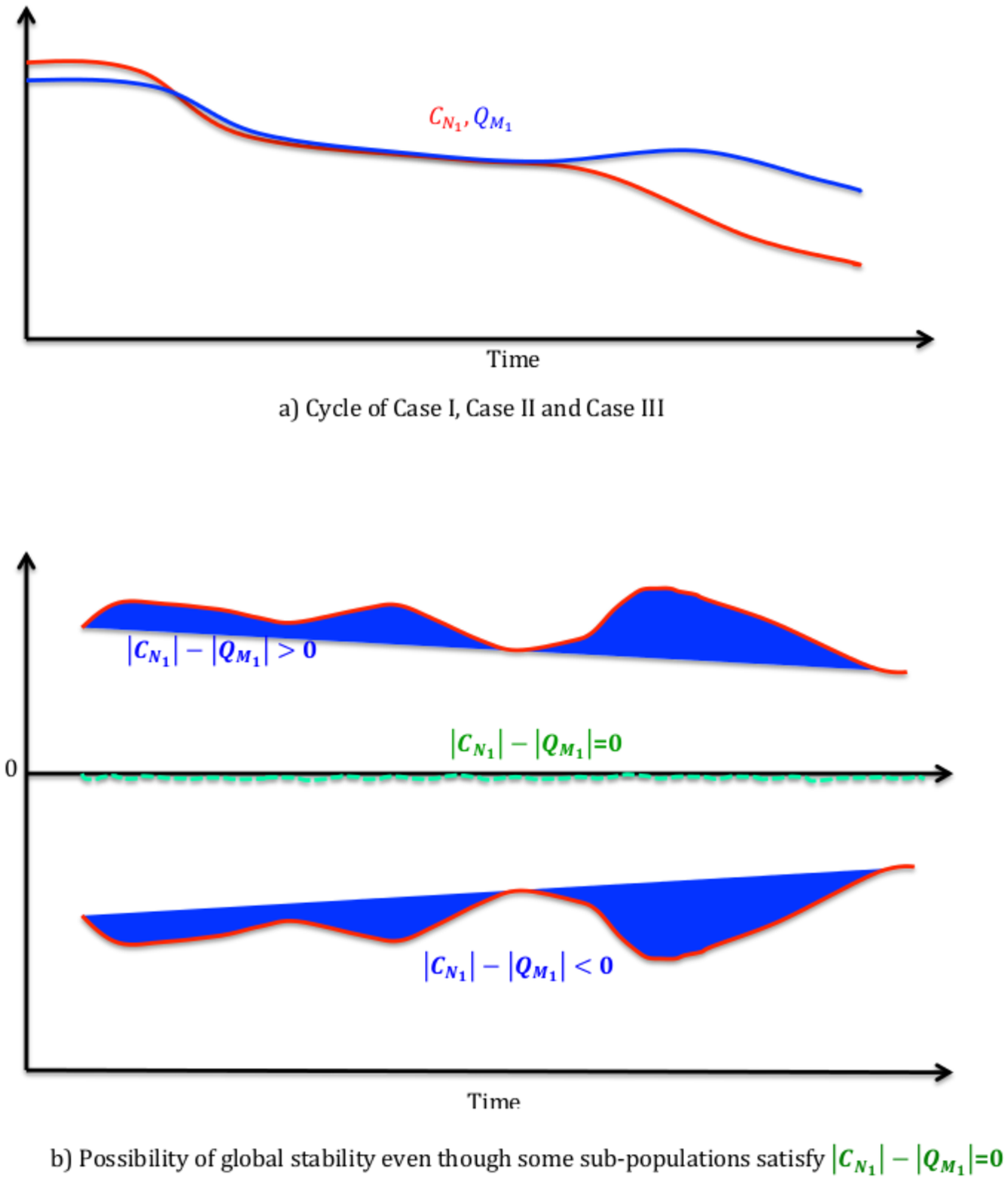}

\caption{(a) The cycle of all the cases could follow one after another and
the quantity at which equality of $C_{N_{1}}$ and $Q_{M_{1}}$ occurs
determines the duration of the case II. (b) Some of the sub-populations
which are not satisfied the equality of $C_{N_{1}}$ and $Q_{M_{1}}$
is compensated by the other sub-populations which are satisfying either
$C_{N_{1}}>$ $Q_{M_{1}}$ or $C_{N_{1}}$ $<Q_{M_{1}}$. }

\end{figure}

\begin{thm}
\label{local vs global stability theorem}Suppose each of the member
of $\left(\begin{array}{c}
F\\
k^{*}
\end{array}\right)$ is satisfying the condition $\left|\mbox{C}_{N_{1}}\right|=$ $\left|Q_{M_{1}}\right|$
and $F-\left(\begin{array}{c}
F\\
k^{*}
\end{array}\right)$ are not satisfying the condition $\left|\mbox{C}_{N_{1}}\right|=$
$\left|Q_{M_{1}}\right|$ at time $t\geq t_{0}$, then this does not
always leads $P_{N}$ to stability. \end{thm}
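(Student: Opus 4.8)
The plan is to reduce global stability to a single residual quantity built from the partition, and then to defeat that quantity by an explicit choice of the non-balanced blocks. First I would exploit the additivity that the partition forces on births and deaths. Because the members of $F$ are pairwise disjoint with $\cup A(s)=P_{N}$, every added and every removed individual is counted in exactly one block, so, writing the block index as $s\in[1,k]$ in the sense of $\left|P_{N}\right|=\int_{1}^{k}\left|A(s)\right|ds$, the global counts split as $\left|\mbox{C}_{N_{1}}(P_{N})\right|=\int_{1}^{k}\left|\mbox{C}_{N_{1}}(A(s))\right|ds$ and $\left|Q_{M_{1}}(P_{N})\right|=\int_{1}^{k}\left|Q_{M_{1}}(A(s))\right|ds$. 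Let $S^{*}\subseteq[1,k]$ be the index set of the $k^{*}$ members forming $\left(\begin{array}{c}F\\k^{*}\end{array}\right)$. On $S^{*}$ the hypothesis gives $\left|\mbox{C}_{N_{1}}(A(s))\right|=\left|Q_{M_{1}}(A(s))\right|$, so those blocks contribute nothing, and the global imbalance collapses to the residual integral $\int_{[1,k]\setminus S^{*}}\left(\left|\mbox{C}_{N_{1}}(A(s))\right|-\left|Q_{M_{1}}(A(s))\right|\right)ds$ taken over exactly the blocks where equality fails.

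Next I would prove the ``not always'' claim by a direct witness. I would choose every block of $F-\left(\begin{array}{c}F\\k^{*}\end{array}\right)$ to sit in Case I of Section 2, that is $\left|\mbox{C}_{N_{1}}(A(s))\right|>\left|Q_{M_{1}}(A(s))\right|$, with no block exhibiting the reverse inequality. Then the residual integrand is strictly positive on a set of positive measure, so $\left|\mbox{C}_{N_{1}}(P_{N})\right|>\left|Q_{M_{1}}(P_{N})\right|$. By the stationarity criterion recorded in Section 2 -- equality of the birth and death cardinalities being the prerequisite for a closed population to hold at replacement -- the global population is strictly adding individuals over the period and is therefore not at replacement, so it cannot satisfy $\left|\left|P_{N}(T)\right|-U\right|<\epsilon$ for all $T>t_{0}$. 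This single configuration already meets every hypothesis while failing stability, which suffices for the theorem.

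To make the conclusion robust I would then treat the borderline case in which the surpluses and deficits among the non-balanced blocks happen to cancel, so that $\left|\mbox{C}_{N_{1}}(P_{N})\right|=\left|Q_{M_{1}}(P_{N})\right|$ holds at the instant $t$. Here instantaneous cardinality balance is still not dynamical stability, and I would invoke the momentum principle that drives the paper: the surplus and deficit blocks are approaching balance at generally nonzero and unequal rates, so their coincidence at $t$ is transient, and at a later time the residual integral reopens with a nonzero sign. This is precisely the compensation picture of Figure 1(b), read as competing signed rates whose momentary sum being zero does not force each rate to be zero.

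The hard part will be this last step, because the setup quantifies each block only through the cardinalities $\left|\mbox{C}_{N_{1}}\right|$ and $\left|Q_{M_{1}}\right|$ and not through an explicit rate functional, so ``momentum'' has to be pinned down before it can be used. I expect to handle it by attaching to each block the one-sided rate of change of its imbalance across the sequence $t_{0}<s_{1}<\cdots<s_{T}$ already used in Case I, and arguing that stability of $P_{N}$ requires all these block rates to vanish simultaneously -- a condition the hypotheses never impose on $F-\left(\begin{array}{c}F\\k^{*}\end{array}\right)$. With that, local balance on $\left(\begin{array}{c}F\\k^{*}\end{array}\right)$ is seen to leave the global dynamics genuinely undetermined, which gives the theorem.
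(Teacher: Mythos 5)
Your first two steps are, in essence, the paper's own proof. The paper splits $C_{N_{1}}$ and $Q_{M_{1}}$ across the $k$ blocks, isolates the ``unstable integral'' $\int_{1}^{k-k^{*}}\left[\left|C_{N_{1}}^{**}(s^{**})\right|-\left|Q_{M_{1}}^{**}(s^{**})\right|\right]ds^{**}$ over the non-balanced blocks (your residual integral), and observes that when every such block difference has the same sign this integral is nonzero, so $P_{N}$ is not stable at $t_{0}$, and remains unstable on $(t_{0},t_{T})$ if that sign configuration persists. Your witness --- all non-balanced blocks sitting in Case I --- is exactly that scenario, and, as you yourself note, it alone establishes ``does not always lead to stability.''

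Your third and fourth steps, however, are not just unnecessary; they point in the wrong direction and could not be completed. The paper's proof closes by observing that when the positive and negative imbalances among the $k-k^{*}$ blocks cancel, so that the unstable integral vanishes at every time point of $(t_{0},t_{T})$, the population \emph{is} stable during that period. That second scenario is what makes the conclusion ``does not always'' rather than ``never'': under the stated hypotheses both outcomes can occur. Your proposed momentum argument --- that a cancellation at time $t$ must be transient and that global stability would require every block rate to vanish simultaneously --- would, if it succeeded, prove the stronger claim that the hypotheses never yield stability, which is false in the paper's own framework. Nor could it succeed: the hypotheses constrain the blocks only through the cardinalities $\left|C_{N_{1}}\right|$ and $\left|Q_{M_{1}}\right|$ at each time, so nothing forbids the signed imbalances from cancelling at every $t$ in the period; persistent cancellation is fully consistent with the hypotheses, and no rate functional is available to rule it out. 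The repair is simple: delete the last two steps and let your Case-I witness carry the theorem, which is precisely how the paper argues.
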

\begin{proof}
Note that $F$ has collection of $k-$sets. Suppose a collection $C$
divides $C_{N_{1}}$ into $k-$components of subpopulations $\left\{ C_{N_{1}}(1),C_{N_{1}}(2),\right.$
$\left.\cdots,C_{N_{1}}(k)\right\} $$ $ such that $ $$\left|C_{N_{1}}\right|$$=$$\int_{1}^{k}\left|C_{N_{1}}(s)\right|ds$,
where $C_{N_{1}}(s)$ is the $s^{th}-$ subset in $C$ and a collection
$Q$ divides $Q_{M_{1}}$ into $k-$ components of subpopulations
$\left\{ Q_{M_{1}}(1),Q_{M_{1}}(2),\cdots,Q_{M_{1}}(k)\right\} $
such that

$ $$\left|Q_{M_{1}}\right|$ $\;=\;$$\int_{1}^{k}\left|C_{M_{1}}(s)\right|ds$,
where $Q_{M_{1}}(s)$ is the $s^{th}-$subset in $Q$. 

By hypothesis, $\left|C_{N_{1}}(s^{*})\right|=\left|Q_{M_{1}}(s^{*})\right|$
for $s^{*}\in$ $\left\{ 1^{*},2^{*},\cdots,k^{*}\right\} $ at each
time $t\geq t_{0}$ until, say, $t_{T}$. The order between $k^{*}$
and $k-k^{*}$ could be one of the following: $2k^{*}<k$, $2k^{*}>k$,
$k^{*}=\frac{k}{2}$. Suppose $C_{N_{1}}\subset C$ and $Q_{M_{1}}\subset Q$
with 

\begin{eqnarray*}
C_{N_{1}}^{*} & = & \left\{ C_{N_{1}}^{*}(1),C_{N_{1}}^{*}(2),\cdots,C_{N_{1}}^{*}(k)\right\} \\
Q_{M_{1}} & = & \left\{ Q_{M_{1}}^{*}(1),Q_{M_{1}}^{*}(2),\cdots,Q_{M_{1}}^{*}(k)\right\} 
\end{eqnarray*}

for same above arbitrary combination of $k^{*}-$components and rest
of the $k-k^{*}$ components are satisfying $\left|C_{N_{1}}^{**}(s^{**})\right|-\left|Q_{M_{1}}^{**}(s^{**})\right|\neq0$
for all $s^{**}=1,2,\cdots,k-k^{*}$. We obtain unstable integral
over all $k-k^{*}$ components to ascertain the magnitude of unstability. 

$ $
\begin{eqnarray}
\int_{1}^{k-k^{*}}\left[\left|C_{N_{1}}^{**}(s^{**})\right|-\left|Q_{M_{1}}^{**}(s^{**})\right|\right]ds^{**}\label{destable-integral}\\
\nonumber 
\end{eqnarray}

The stable integral for this situation is 

\begin{eqnarray}
\int_{1}^{k^{*}}\left[\left|C_{N_{1}}^{*}(s^{*})\right|-\left|Q_{M_{1}}^{*}(s^{*})\right|\right]ds^{*}\label{stable-integral}\\
\nonumber 
\end{eqnarray}

To check the unstable and stable points over the time period $(t_{0},t_{T})$,
one can compute following integrals: 

\begin{eqnarray}
\int_{t_{0}}^{t_{T}}\int_{1}^{k-k^{*}}\left[\left|C_{N_{1}}^{**}(s^{**})\right|-\left|Q_{M_{1}}^{**}(s^{**})\right|\right]ds^{**}du\label{destable-integral over time period}\\
\nonumber 
\end{eqnarray}

\begin{eqnarray}
\int_{t_{0}}^{t_{T}}\int_{1}^{k^{*}}\left[\left|C_{N_{1}}^{*}(s^{*})\right|-\left|Q_{M_{1}}^{*}(s^{*})\right|\right]ds^{*}du\label{stable-integral-during time period}\\
\nonumber 
\end{eqnarray}

For each of the $k-k^{*}$ component, the values of $\left|C_{N_{1}}^{**}(s^{**})\right|-\left|Q_{M_{1}}^{**}(s^{**})\right|$
can be either positive or negative. If at time $t_{0}$, for all $s^{**}=1,2,\cdots,k-k^{*}$,
the values of $\left|C_{N_{1}}^{**}(s^{**})\right|-\left|Q_{M_{1}}^{**}(s^{**})\right|$
are positive (or negative) then the eq. (\ref{destable-integral})
will take a positive (or negative) quantity and the population at
time $t_{0}$ is not stable. If such a situation continues for all
$t_{T}\geq t_{0}$, then the integral in eq. (\ref{destable-integral over time period})
would never become zero and the population remains unstable in the
entire period $(t_{0},t_{T})$. However, for some of the $s^{**}$,
if the quantity $\left|C_{N_{1}}^{**}(s^{**})\right|-\left|Q_{M_{1}}^{**}(s^{**})\right|$
is positive and for other $s^{**}$, if the quantity $\left|C_{N_{1}}^{**}(s^{**})\right|-\left|Q_{M_{1}}^{**}(s^{**})\right|$
is negative such that eq. (\ref{destable-integral}) is zero at each
of the time points for the period $(t_{0},t_{T})$ then the population
remains stable during this period (because by hypothesis the eq. (\ref{stable-integral-during time period})
is zero). 
\end{proof}
Case III. $\left|\mbox{C}_{N_{1}}\right|<$ $\left|Q_{M_{1}}\right|$
at time $t_{0}.$ Global occurrence of this case at lower values of
$\left|\mbox{C}_{N_{1}}\right|$ and $\left|Q_{M_{1}}\right|$ indicates
that the $P_{N}$ is declining and also is in unstable mode. $R_{\phi}$
has been very low consistently for the period $t<t_{0}$ and the supply
to the set $P_{M}$ has diminished over a period in the past. All
the subsets of $C_{N_{1}}$ and $Q_{M_{1}}$ might not be stable in
case III, but by similar arguments of the Theorem \ref{local vs global stability theorem},
global population behavior nullifies some of the local population
and case III is still satisfied globally. 

All three cases would be repeated one following another. Most countries
are currently facing case I with varying distance between $\left|\mbox{C}_{N_{1}}\right|$
and $\left|Q_{M_{1}}\right|$ .

\section{Replacement Metric}

We introduce a metric, $d_{M}$, which we call \emph{a replacement
metric,} with a space, $M_{r}$ as follows:
\begin{defn}
\label{(Replacement-Metric)}(Replacement Metric). 

Let $A_{1}=\min\left\{ \left|\left|\mbox{C}_{N_{1}}(s)\right|-\left|Q_{M_{1}}(s)\right|\right|:s>t_{0}\right\} $
and 

$A_{2}=\max\left\{ \left|\left|\mbox{C}_{N_{1}}(s)\right|-\left|Q_{M_{1}}(s)\right|\right|:s>t_{0}\right\} $.
Let $M_{r}=\left[A_{1},A_{2}\right]\subset\mathbb{R}^{+}$ and $M=\left\{ \left|\left|\mbox{C}_{N_{1}}(s)\right|-\left|Q_{M_{1}}(s)\right|\right|\right.$
$\left.:s>t_{0}\right\} $ with the metric $d_{M}(x,y)=\frac{\left|x-y\right|}{2}$
. We can verify that $\left(M,d_{M}\right)$ is a metric space with
$d_{M}:\left(M\times M\right)\rightarrow M_{r}$ and non-empty set
$M.$
\end{defn}
The metric $M$, in the definition 1 is bounded, because $d_{M}\left(x,y\right)<k$
for $k>0.$ 
\begin{defn}
Suppose $\left|\left|\mbox{C}_{N_{1}}(s_{1})\right|-\left|Q_{M_{1}}(s_{1})\right|\right|=f_{1}$,
$\left|\left|\mbox{C}_{N_{1}}(s_{2})\right|\right.$ $-$$\left.\left|Q_{M_{2}}(s_{2})\right|\right|=f_{2}$
and so on for $s_{1}<s_{2}<...$ . Then we say population is stable
if $f_{s_{T}}\rightarrow0$ for sufficiently large $T$ and $ $$\frac{d}{ds_{T}}\left|\mbox{C}_{N_{1}}(s_{T})\right|=\frac{d}{ds_{T}}\left|\mbox{C}_{N_{1}}(s_{T})\right|=0$.
\end{defn}

\section{Conclusions}

We can prove that the value at which the population remains stable
is variable, i.e. the value at which the population becomes unstable
by deviating from case II could be different from the value (at a
future point in time) population becomes stable when it converges
to case II. \emph{Replacement metrics }(see definition \ref{(Replacement-Metric)})
are helpful in seeing this argument and such analysis is not possible
by Lotka-Voltera or Lyupunov methods. Due to population momentum,
there will be an increase in the population even though the reproduction
rate of the population becomes below the replacement level. Population
stability will always attain a local stable points before diverging
and again converging at a local stable point. The duration of a local
stable point depends on the density of the population and resources
available for the population.

\end{document}